\newif\ifIEEE 
    \def\BibTeX{{\rm B\kern-.05em{\sc i\kern-.025em b}\kern-.08em
    T\kern-.1667em\lower.7ex\hbox{E}\kern-.125emX}}
\pgfplotsset{compat=1.18}
\newcommand{\ts}{t\sr{s}}
\newcommand{\tj}{\tilde{t}_k} 
\newcommand{\tk}{t_k}
\newcommand{\Dk}{\Delta \tk}
\newcommand{\dk}{_{\mathrm{d,}k}}
\newcommand{\etal}{et\,al.\xspace}
\begin{document}

\date{\today}
\title{On the Effect of Sampling-Time Jitter}
\author{Dieter Schwarzmann and Simon Käser
\thanks{Submitted for review in IEEE
Transactions on Control Systems Technology on September 4, 2025. With thanks to Prof. Jan Lunze and Prof. Andrea Iannelli for feedback and corrections .}
\thanks{Dieter Schwarzmann is an adjunct professor at IST of the University of Stuttgart, Germany and owner of www.systemwissenschaften.de, an engineering services company.
(e-mail: dieter.schwarzmann@ist.uni-stuttgart.de). }
\thanks{Simon Käser is a graduate student in Engineering Cybernetics at the University of Stuttgart, Germany
(e-mail: simonwilhelmkaeser@gmail.com).}}
\maketitle

\begin{abstract}
    This brief, aimed at practitioners, offers an analysis of the effect of sampling-time jitter, \ie the error produced by execution-time inaccuracies. We propose reinterpreting jitter-afflicted linear time-invariant systems through equivalent jitter-free analogs. By constructing a perceived system that absorbs the effects of timing perturbations into its dynamics, we find an affine scaling of jitter. We examine both measurement and implementation scenarios, demonstrating that the presence of jitter effectively scales the system matrices. Moreover, we observe that, in the Laplace domain, jitter can be interpreted as a frequency scaling. 
\end{abstract}

\section{Introduction}

Timing imperfections, known as \emph{sampling-time jitter} (also referred to as \emph{aperiodic/nonuniform sampling}), cause the actual sampling interval to deviate from its nominal value. That is, although an operation is nominally expected to occur every $\ts$, actual executions occur at intervals of $\ts + \tilde{t}$. We define jitter $\tilde t$ as an unknown, bounded, stochastic deviation from the nominal sampling time $\ts$.

Our work is motivated by the introduction of central computers in modern automotive architectures. There, function executions are triggered by timers of an operating system, which are subject to preemption due to increased resource contention, leading to variations in task execution timing.  

This brief aims to provide practitioners with a simple and interpretable analysis of how jitter affects system dynamics. 
While the theoretical foundations of sampled-data systems are well-established, we could not find clear, direct explanations on how to assess the specific effects of timing jitter on the perceived open-loop behavior of the affected linear time-invariant system. To this end, we find the corresponding system which at nominal sampling produces the identical I/O behavior on a sample-by-sample basis as the system under jitter. The results we present may appear obvious to some, but when faced with similar situations, we could not find a concise reference analyzing how jitter-sampled continuous-time systems can be interpreted. 

This paper is organized as follows.
In Section \ref{sec:approach} we present the problem description, our approach, and notation. Section~\ref{sec:effects} presents our results, deriving scaling relationships dependent on jitter for the $A$- and $B$-matrices as well as a Laplace-domain interpretation. Section~\ref{sec:examples} gives illustrative examples. 

\section{Literature Review}

Sampling‐time jitter originates from research on nonuniform sampling, where early studies characterized reconstruction errors and spectral distortion under random timing perturbations \cite{Balakrishnan1962,Brown1965,Marvasti2001}. In system identification, Eng and Gustafsson proposed estimators that compensate for stochastic sampling‐time errors \cite{EngGustafsson2006,EngGustafsson2008}, and Sei analyzed how jitter statistics bias discrete‐time parameter estimates \cite{Sei2013}. These efforts enhance discrete‐time estimation but do not reinterpret the underlying continuous‐time dynamics, helping the practitioner interpret behavior.

In control engineering, aperiodic sampling is commonly modeled as a time‐varying input delay. Mikheev \etal \cite{MikheevSobolevFridman1988}, Wittenmark \etal \cite{Wittenmark1995Timing}, and Lincoln \cite{Lincoln2002} proposed delay‐compensation schemes for digital controllers while robust LMI techniques ensure closed‐loop stability under bounded jitter \cite{Oishi2010}. Surveys by Hetel \etal \cite{Hetel2017}, and Zhang \etal \cite{Zhang2023survey} synthesize these methods. Unlike these delay‐based or closed‐loop analyses, our work focuses on deriving a jitter‐free continuous‐time analogue whose sample‐by‐sample I/O behavior matches that of the jitter‐afflicted system. To our knowledge, no prior study has offered this direct continuous‐time, non-conservative reinterpretation of sampling‐time jitter.

\section{Problem Description and Assumptions}\label{sec:approach}
\begin{figure} [ht]
  \centering
    \begin{tikzpicture}[
  block/.style = {draw, minimum width=1.0cm, minimum height=0.6cm},
  swblock/.style = {draw, minimum width=0.8cm, minimum height=0.6cm},
  signal/.style = {->, thick},
  dashedline/.style = {thick, dashed},
  circ/.style = {circle, draw, minimum size=0.6cm},
  every node/.style={font=\scriptsize}
  ]

\definecolor{myblue}{RGB}{210,220,240}
\definecolor{mygreen}{RGB}{110,130,80}

\filldraw[fill=myblue!50, draw=black, thick] (-2.4,-0.6) rectangle (2.4,0.6);

\node at (0,-0.8) {$H\sr{d}$};

\node[swblock, fill=white] (sw1) at (-1.8,0) {};
\node[block, fill=white] (G) at (0,0) {$H$};
\node[swblock, fill=white] (sw2) at (1.8,0) {};

\node at ($(sw1.west) + (-0.6,0)$) {$u$};
\node at ($(sw2.east) + (0.6,0)$) {$y$};

\draw[signal] ($(sw1.west) + (-0.5,0)$) -- (sw1.west);
\draw[signal] (sw1.east) -- ++(0.3,0) -- (G.west);
\draw[signal] (G.east) -- ++(0.3,0) -- (sw2.west);
\draw[signal] (sw2.east) -- ++(0.5,0);

\coordinate (sw1center) at (sw1.center);
\node[circle, draw=black, minimum size=2pt, inner sep=0pt] (sw1dotL) at ($(sw1center) + (-0.15,0)$) {};
\node[circle, draw=black, minimum size=2pt, inner sep=0pt] (sw1dotR) at ($(sw1center) + (0.15,0)$) {};
\draw[thick] (sw1dotL) -- ($(sw1dotR) + (0,0.1)$);

\coordinate (sw2center) at (sw2.center);
\node[circle, draw=black, minimum size=2pt, inner sep=0pt] (sw2dotL) at ($(sw2center) + (-0.15,0)$) {};
\node[circle, draw=black, minimum size=2pt, inner sep=0pt] (sw2dotR) at ($(sw2center) + (0.15,0)$) {};
\draw[thick] (sw2dotL) -- ($(sw2dotR) + (0,0.1)$);

\node[circ, draw=mygreen, thick, fill=mygreen!20] (clock) at (0,1.0) {};
\draw[thick] (clock.center) -- ++(4pt,-4pt);

\node at (0,1.5) {\text{Timer}};

\draw[dashedline, mygreen, <-] 
  (sw1) -- 
  ($(sw1|-clock)$) node[above, text=mygreen] {$t_k$} -- 
  (clock);

\draw[dashedline, mygreen, ->] 
  (clock) -- 
  ($(sw2|-clock)$) node[above, text=mygreen] {$t_k$} -- 
  (sw2);

\end{tikzpicture}
  \caption{Sample and hold of input $u$ and output $y$ at times $t_k$}
  \label{fig:sampling}
\end{figure}
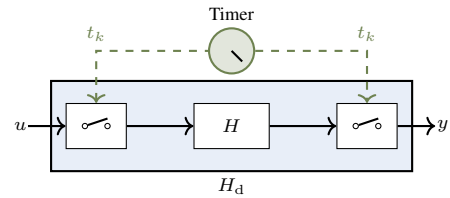
In \figref{fig:sampling}, a timer regularly triggers the execution of a function at points in time $\tk$, with $k \in \ZZ$ being the count of samples. We write a subscript $k$ to denote a sample-dependent variable. That function can be simply taking measurements of $u_k$ and $y_k$ at points in time $\tk$, \eg for offline system identification purposes, or it can be starting the execution of a computation of $y_k$ given $u_k$ (\eg controller or filter).  The execution instances occur every $\Dk = t_{k+1}-t_k$, designed to be the constant nominal sampling time $\Dk = \ts = \mathrm{const} \; \forall{k}$. We are concerned with interpreting the effect resulting from an additive, unknown, and time-varying jitter $\tj$ to that sampling time, varying the duration of every sample $k$. We denote systems and associated matrices with a tilde ($\tilde \cdot$) when under the influence of jitter and without for periodic sampling.

When asking for the effect that jitter produces, we distinguish between two different problem statements:
\begin{enumerate}
    \item[A)] \textbf{Plant Measurement}: Given a true continuous-time system $\tilde H$ which is sampled under jitter every $\ts+\tj$; which continuous-time system $H$ is perceived, assuming these samples had been taken at nominal sampling time $\ts$, \ie which system $H$ produces the identical I/O behavior but when discretized without jitter?
    \item[B)] \textbf{Controller or Filter Implementation}: Given a designed continuous-time system $H$ which was discretized with $\Dk = \ts = \mathrm{const}$ to obtain a discrete-time implementation $H\sr{d}$; what true behavior $\tilde H$ results when this discrete-time system $H\sr{d}$ is executed with jitter $\ts + \tj$?
\end{enumerate}
\begin{note} Both questions A) and B) are reversed to each other in the sense, that the given system in one is the desired solution in the other. 
Moreover, the given system is mostly considered time invariant and is then sampled with jitter - the perceived system or the effective resulting system will be time-variant, varying every sample $k$.
\end{note}

We solve both questions with the same approach by finding a relationship between $H$ and $\tilde H$ dependent on jitter. We leave it to the reader to pick their problem statement and solve accordingly. The examples at the end of this paper show both cases.

\paragraph*{Expectation for Case A}  
Assuming positive jitter, each sample sees a longer time step. Yet, since the nominal interval is assumed, the measurement appears contracted relative to the true movement. In other words, the perceived poles of the system are expected to be higher depending on the jitter.
\\[1mm]
\noindent For meaningful results, we restrict our investigation to practical choices for sampling times and jitter magnitudes.

\begin{ass}[Choice of nominal sampling time]\label{ass:ts_factor_alpha}
    In order to ensure that any oscillatory behaviors are captured without aliasing and can be recovered uniquely (\cf Lemma~\ref{lem:ts_for_no_aliasing}), we set a maximum sampling time depending on the highest imaginary part of all poles $\omega_{\max} = \max_i \abs{\Im{ \lambda_i}}, \forall i$  as
    \begin{equation}
        \ts \le \frac{\pi}{\omega_{\max}}.
    \end{equation}
\end{ass}

\begin{ass}[Jitter size]\label{ass:tj_factor_ts}
    Jitter can assume positive or negative values (allowing for premature sampling), but is limited to $\tj > -\ts$ (to avoid sampling past data or resampling the same instant). Moreover, we assume jitter is a time-varying fraction $\varepsilon_k$ of the nominal sampling time:
    \begin{equation}
        \tj = \varepsilon_k\, \ts,\quad \text{with } \varepsilon_k > -1.
    \end{equation}
    This definition enables practitioners to interpret jitter directly as a percentage of $\ts$ (for example, 10\% jitter corresponds to $\varepsilon_k = \pm 0.1$), which is a typical requirement for automotive systems.
\end{ass}
\begin{note}
    In practice, we would recommend limiting the upper bound of jitter to $\tj < \ts$ in order to avoid full sample misses (\ie $\varepsilon_k \in (-1,1)$) as these may be better treated as delays. However, all results hold for $-1 < \varepsilon_k < \infty$.
\end{note}

\section{Effect of Jitter} \label{sec:effects}
We will answer the question: Given a continuous-time system $\tilde H$, sampled with jitter $\Dk = \ts +\tj$; which system $H$ is perceived if the same I/O samples taken are assumed to have been recorded at uniform sampling $\Dk = \ts$. In other words, which two systems $H$ and $\tilde H$, produce the identical I/O behavior $(u_k, y_k)$, although at different sampling $\ts$ and $\ts +\tj$, respectively. This is achieved if they produce the identical discrete-time system $H\sr d$.

\begin{rem} The critical reader will have noticed that equivalence of I/O signals on a sample-by-sample basis but with different sampling times will mean a different temporal behavior. This is true, but also meaningless when it comes to implementing algorithms on computers. The computer or control unit sees samples and we are concerned with behavior that the computer sees and how to interpret that back in continuous time. Thus, a sample-by-sample equivalence is what is needed.
\end{rem}

\subsection{Preliminaries} 
The linear continuous-time system $H$ is
\begin{equation}\label{eq:G_definition}
\begin{split}
    H:\quad \dot{x}(t) &= A_k x(t) + B_k u(t),\quad x(0)=x_0, \\
    y(t) &= C x(t) + D u(t).
\end{split}
\end{equation}
The subscript $k$ in $A_k$ and $B_k$ indicates that both matrices are constant within one sample, but may change sample-to-sample.
Also note, that the system under jitter $\tilde H$ is described identically but with $\tilde A_k$ and $\tilde B_k$.

The exact discretization of equation \eqref{eq:G_definition} for a single-sample timestep $\Dk$ involves solving the differential equation at each discrete instance $k=1,2,\ldots$, and assuming a constant input over each interval (i.e., $u(t)=u_k$ for $t\in [\tk,t_{k+1})$). The discrete-time system is then described by matrices $A\dk$ and $B\dk$, both varying each sample $k$ by
\begin{equation}\label{eq:Gd_definition}
\begin{split}
   x_{k+1} &= A\dk\, x_k + B\dk\, u_k,\quad x[0]=x_0,\\[1mm]
    y_k &= C x_k + D u_k,
\end{split}
\end{equation}
with \cite{levine2010control} $\Dk = \ts$
\mathtoolsset{showonlyrefs=false}
\begin{subequations} \label{eq:AdBd_definition}
\begin{equation} \label{eq_test}
    A\dk = e^{A_k\,\ts}, \quad  B\dk = \int_0^{\ts} e^{A_k\,\tau}\,d\tau\, B_k,
\end{equation}
and equivalently for $\Dk = \ts+\tj$
\begin{equation}
A\dk = e^{\tilde A_k(\ts+\tj)},\quad  B\dk = \int_0^{\ts+\tj} e^{\tilde A_k\,\tau}\,d\tau\, \tilde B_k.
\end{equation}
\end{subequations}
\mathtoolsset{showonlyrefs=true}
\subsection{Impact on the \texorpdfstring{$A$-Matrix}{A-Matrix}}

We state the main result of this section as the following theorem.
\begin{thm}[Jitter scales the $A$-matrix]\label{thm:jitter_scales_Ahat}
    A system $\tilde H$ under jitter $\tj = \varepsilon_k \ts$ with Assumptions~\ref{ass:ts_factor_alpha} and \ref{ass:tj_factor_ts} will be perceived as a system $H$, sampled at $\ts$ where each sample $k$ of the $A$-matrix appears scaled by
    \begin{equation}
        A_k = \tilde{A}_k\cdot \frac{\ts+\tj}{\ts} = \tilde{A}_k\cdot (1+\varepsilon_k).
    \end{equation}
\end{thm}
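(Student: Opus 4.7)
The plan is to translate the sample-by-sample I/O equivalence directly into an identity between the two discretizations given in \eqref{eq:AdBd_definition}. Since the perceived system $H$ and the jitter-afflicted system $\tilde H$ are driven by the same input sequence $u_k$ and produce the same output sequence $y_k$ for the same initial condition, and since $C$ and $D$ are not altered by the discretization, equivalence of the state-update maps is necessary and sufficient. That is, I would first argue that $H\sr{d}$ constructed from $(A_k,\ts)$ must coincide with $H\sr{d}$ constructed from $(\tilde A_k,\ts+\tj)$, which by \eqref{eq:AdBd_definition} yields
\begin{equation}
  e^{A_k\,\ts} \;=\; e^{\tilde A_k (\ts+\tj)}.
\end{equation}

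Next, I would pass from this matrix-exponential identity to an identity of the exponents. The natural candidate is $A_k \ts = \tilde A_k (\ts+\tj)$, from which the stated scaling $A_k = \tilde A_k\,(\ts+\tj)/\ts = \tilde A_k (1+\varepsilon_k)$ follows immediately by dividing through by $\ts>0$. Note that this single step delivers the theorem once the matrix logarithm is well-defined and unique on the relevant spectrum.

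The main obstacle is precisely that uniqueness. Matrix exponentials are not injective in general: two matrices with eigenvalues differing by integer multiples of $2\pi i/\ts$ produce the same $e^{\cdot \ts}$. Here is where I would invoke Assumption~\ref{ass:ts_factor_alpha} and the referenced Lemma~\ref{lem:ts_for_no_aliasing}: restricting $\ts \le \pi/\omega_{\max}$ places all eigenvalues of $A_k$ (and of the candidate $\tilde A_k (1+\varepsilon_k)/\,$, after rescaling by a positive factor bounded via Assumption~\ref{ass:tj_factor_ts}) inside a strip $|\Im\lambda|<\pi/\ts$ where the principal matrix logarithm is a bijection. Consequently, the logarithm of $e^{A_k\ts}$ is uniquely determined, and the two exponents must agree. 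The positivity $1+\varepsilon_k>0$ from Assumption~\ref{ass:tj_factor_ts} further guarantees that scaling $\tilde A_k$ by $(1+\varepsilon_k)$ does not move any eigenvalue out of this principal strip, so the candidate $A_k$ is itself admissible.

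Finally, I would close the argument by verifying consistency: substituting $A_k = \tilde A_k(1+\varepsilon_k)$ back into $e^{A_k\ts}$ yields $e^{\tilde A_k (1+\varepsilon_k)\ts} = e^{\tilde A_k (\ts+\tj)}$, confirming that the proposed scaling indeed reproduces the same $A\dk$. This sufficiency check, combined with the uniqueness from Assumption~\ref{ass:ts_factor_alpha}, establishes the theorem. The routine part is the algebraic rearrangement; the only nontrivial point worth spelling out carefully is the no-aliasing justification for inverting the matrix exponential.
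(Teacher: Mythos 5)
Your proposal is correct and follows essentially the same route as the paper: equate $e^{A_k\ts}=e^{\tilde A_k(\ts+\tj)}$ from \eqref{eq:AdBd_definition}, invoke uniqueness of the principal matrix logarithm under Assumption~\ref{ass:ts_factor_alpha} (Lemma~\ref{lem:ts_for_no_aliasing}), and conclude $A_k\ts=\tilde A_k(\ts+\tj)$. Your added remarks on the no-aliasing strip and the consistency check simply make explicit what the paper's one-line proof leaves implicit.
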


\begin{proof}
    From equation \eqref{eq:AdBd_definition}, we have
    $A\dk = e^{\tilde{A}_k\,(\ts+\tj)}$ and $A\dk = e^{A_k\,\ts}$.
    Equating these and taking the matrix logarithm (which is unique on the principal branch under Assumption~\ref{ass:ts_factor_alpha}, see Lemma \ref{lem:ts_for_no_aliasing}) yields
        $\tilde{A}_k\cdot (\ts+\tj) = A_k\cdot \ts$.
\end{proof}



\subsection{Impact on the \texorpdfstring{$B$-Matrix}{B-Matrix}}
We now state the analogous result for the $B$-matrix:

\begin{thm}[Jitter scales the $B$-matrix]\label{thm:jitter_scales_Bhat}
    Jitter $\tj = \varepsilon_k\, \ts$ scales the resulting $B$-matrix as
    \begin{equation}
        B_k = \tilde{B}_k\cdot \frac{\ts+\tj}{\ts} = \tilde{B}_k\cdot (1+\varepsilon_k).
    \end{equation}
\end{thm}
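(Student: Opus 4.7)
My plan is to exploit the same two representations of $B\dk$ that underlie the proof of Theorem~\ref{thm:jitter_scales_Ahat}, but now focusing on the input-map factors. From equation~\eqref{eq:AdBd_definition} we have the two expressions
\begin{equation*}
B\dk = \int_0^{\ts} e^{A_k\,\tau}\,d\tau\; B_k = \int_0^{\ts+\tj} e^{\tilde{A}_k\,\tau}\,d\tau\; \tilde{B}_k.
\end{equation*}
Since Theorem~\ref{thm:jitter_scales_Ahat} already yields $A_k = \tilde A_k\,(1+\varepsilon_k)$, I would substitute this into the exponent of the left integrand, which immediately invites the scalar change of variable $\sigma = (1+\varepsilon_k)\,\tau$. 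Because $1+\varepsilon_k>0$ by Assumption~\ref{ass:tj_factor_ts}, the new integration limits are $0$ and $\ts+\tj$, and the Jacobian factor $1/(1+\varepsilon_k)$ pulls out in front of the integral. This turns the left-hand side into $\tfrac{1}{1+\varepsilon_k}\int_0^{\ts+\tj} e^{\tilde A_k\,\sigma}\,d\sigma\; B_k$, which now shares the very same matrix kernel with the right-hand side.

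The second step is to cancel this common integral matrix to obtain $B_k = (1+\varepsilon_k)\,\tilde B_k$. For that cancellation to be legitimate, the matrix $\int_0^{\ts+\tj} e^{\tilde A_k\,\sigma}\,d\sigma$ must be invertible. I would argue this eigenvalue by eigenvalue: a nonzero eigenvalue $\lambda$ of $\tilde A_k$ contributes $(e^{\lambda(\ts+\tj)}-1)/\lambda$, while the zero eigenvalue contributes $\ts+\tj>0$. Invertibility therefore reduces to ruling out $\lambda(\ts+\tj)\in 2\pi i\,\mathbb Z\setminus\{0\}$, which is precisely the aliasing condition guaranteed by Assumption~\ref{ass:ts_factor_alpha} (together with the Lemma referenced in the proof of Theorem~\ref{thm:jitter_scales_Ahat}).

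The main obstacle I expect is this invertibility/aliasing step, not the computation: the change of variable is purely mechanical, but one has to be careful that the integrated exponential is not rank-deficient, otherwise the relation between $B_k$ and $\tilde B_k$ would be underdetermined. Once the spectral condition from Assumption~\ref{ass:ts_factor_alpha} is invoked, cancellation is valid and the stated affine scaling $B_k = \tilde B_k\,(\ts+\tj)/\ts$ follows directly, completing the argument in parallel with Theorem~\ref{thm:jitter_scales_Ahat}.
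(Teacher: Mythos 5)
Your proposal is correct and follows essentially the same route as the paper: both equate the two expressions for $B\dk$ from \eqref{eq:AdBd_definition}, use $A_k=(1+\varepsilon_k)\tilde A_k$ from Theorem~\ref{thm:jitter_scales_Ahat} together with the scalar change of variable $\sigma=(1+\varepsilon_k)\tau$ to bring the two integrals to a common kernel, and then cancel the (invertible) integral matrix. The only differences are cosmetic --- you rescale the nominal integral up to $\ts+\tj$ while the paper rescales the jittered one down to $\ts$, and you sketch the spectral argument for invertibility that the paper merely asserts in a note under Assumption~\ref{ass:ts_factor_alpha}.
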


\begin{proof}
Equating
$B\dk=\int_{0}^{t_{s}}e^{A\tau}d\tau B$ and $K=B\dk=\int_{0}^{t_{s}(1+\varepsilon)}e^{\tilde A\tau} d\tau \tilde B$ we obtain equal integral limits by the change of variable in $K$: $\tau=(1+\varepsilon)\sigma$, implying $d\tau=(1+\varepsilon)d\sigma$. Thus,
\begin{equation}
K=(1+\varepsilon)\int_{0}^{t_{s}}e^{\tilde A(1+\varepsilon)\sigma}\,d\sigma \tilde B
=(1+\varepsilon)\int_{0}^{t_{s}}e^{A\sigma}\,d\sigma \tilde B,
\end{equation}
with $A=(1+\varepsilon)\tilde A$ from Theorem \ref{thm:jitter_scales_Ahat}. 

We can now equate
\begin{equation}
\int_{0}^{t_{s}}e^{A\tau}\,d\tau B
=(1+\varepsilon)\int_{0}^{t_{s}}e^{A\tau} \,d\tau \tilde B.
\end{equation}
leading to the stated result.
\end{proof}
\begin{note}
    The integral $\int_{0}^{t_{s}}e^{A\tau}\,d\tau$ is uniquely invertible under Assumption \ref{ass:ts_factor_alpha}. Due to space limitation we will not show this.
\end{note}

\subsection{Impact of Jitter on the Perceived Transfer Function}
When jitter is constant, we can analyze its effect directly in the Laplace domain. The following theorem summarizes the result:

\begin{thm}[Jitter scales frequency]
    Assuming constant jitter $\tj = \varepsilon \ts = \mathrm{const}$, the transfer function of the perceived system $H(s)$ is given by
    \begin{equation}
        H(s) = C\left(\frac{s}{1+\varepsilon}I - A\right)^{-1}B + D.
    \end{equation}
    Equivalently, jitter scales the Laplace variable $s$ of the original system:
    \begin{equation}
        H(s) = \tilde{H}\left(\frac{s}{1+\varepsilon}\right).
    \end{equation}
\end{thm}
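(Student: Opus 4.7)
The plan is to start from the standard transfer function formula for an LTI system and apply the matrix scaling identities from Theorems~\ref{thm:jitter_scales_Ahat} and~\ref{thm:jitter_scales_Bhat}. Because the jitter fraction $\varepsilon_k \equiv \varepsilon$ is constant, the perceived system $H$ is truly time-invariant (the sample index $k$ drops from $A_k$ and $B_k$), so Laplace transforming the state-space description~\eqref{eq:G_definition} gives $H(s) = C(sI - A)^{-1}B + D$ without any time-varying complication. This legitimizes talking about a transfer function at all, and is the starting point of the derivation.

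Next I would substitute $A = (1+\varepsilon)\tilde A$ and $B = (1+\varepsilon)\tilde B$ into this expression. The decisive algebraic step is to factor the scalar out of the matrix pencil,
\begin{equation*}
sI - (1+\varepsilon)\tilde A \;=\; (1+\varepsilon)\left(\frac{s}{1+\varepsilon}I - \tilde A\right),
\end{equation*}
so that inverting produces a factor $1/(1+\varepsilon)$ which cancels exactly against the $(1+\varepsilon)$ coming from $B$. What remains is $C(\frac{s}{1+\varepsilon}I - \tilde A)^{-1}\tilde B + D$, which is the claimed closed form; by the very definition of $\tilde H$ this equals $\tilde H(s/(1+\varepsilon))$, establishing the second, more illuminating identity.

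I do not anticipate a genuine obstacle here; the result is a short manipulation once Theorems~\ref{thm:jitter_scales_Ahat} and~\ref{thm:jitter_scales_Bhat} are in hand. The only point requiring a moment's care is tracking the factor $(1+\varepsilon)$ through the matrix inverse, where a missed cancellation would produce the wrong scaling of $s$. It is also worth noting that the inverse is well defined on the appropriate resolvent set: $s$ lies in the domain of $H$ if and only if $s/(1+\varepsilon)$ lies in the resolvent set of $\tilde A$, and since $1+\varepsilon > 0$ by Assumption~\ref{ass:tj_factor_ts} this is simply a uniform dilation of the frequency axis, i.e.\ poles and zeros all migrate radially by the factor $1+\varepsilon$.
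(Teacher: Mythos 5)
Your proposal is correct and follows essentially the same route as the paper's own proof: substitute $A=(1+\varepsilon)\tilde A$, $B=(1+\varepsilon)\tilde B$ into $H(s)=C(sI-A)^{-1}B+D$, factor the scalar $(1+\varepsilon)$ out of the pencil $sI-(1+\varepsilon)\tilde A$, and cancel it against the factor coming from $\tilde B$. Your added remarks on time-invariance under constant $\varepsilon$ and on the resolvent set are sensible but not a different argument.
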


\begin{proof}
    Starting from
        $H(s) = C(sI-A)^{-1}B + D$
    and using Theorems~\ref{thm:jitter_scales_Ahat} and \ref{thm:jitter_scales_Bhat} with $A=(1+\varepsilon)\tilde A$ and $B=(1+\varepsilon)\tilde B$, we obtain
    \begin{equation} \label{eq:proof_Bscaling}
        H(s) = C\left(sI-(1+\varepsilon)\tilde A\right)^{-1}(1+\varepsilon) \tilde B + D.
    \end{equation}
    Notice that
    \begin{equation}
        sI - (1+\varepsilon) \tilde A = (1+\varepsilon)\left(\frac{s}{1+\varepsilon}I-\tilde A\right).
    \end{equation}
    Taking the inverse we have
    \begin{equation}
        \left(sI-(1+\varepsilon)\tilde A\right)^{-1} = \frac{1}{1+\varepsilon}\left(\frac{s}{1+\varepsilon}I-\tilde A\right)^{-1}.
    \end{equation}
    Inserting back into equation \eqref{eq:proof_Bscaling} gives the stated result by canceling the positive scalar $1+\varepsilon$.
\end{proof}

The scaling of the A-matrix by $(1 + \varepsilon)$ implies that the eigenvalues are scaled accordingly. For a stable system with eigenvalues having negative real parts, a positive $\varepsilon$ makes the real parts more negative, resulting in faster decay rates. Conversely, a negative $\varepsilon$ leads to slower decay. In the frequency domain, the transfer function $H(s) = \tilde H\left(\frac{s}{1 + \varepsilon}\right)$ indicates that the frequency response is stretched or compressed. For instance, resonant peaks appear at frequencies scaled by \((1 + \varepsilon)\), affecting the perceived bandwidth and resonance characteristics of the system.

\subsection{Summary}
Table \ref{tab:jitter_comparison} shows a consolidated comparison for the two use cases as a brief reference. It is valid on a sample-by-sample basis (i.e. the subscript $k$ is omitted for brevity) - with the exception of the transfer function, which is only valid for constant jitter.

\begin{table}[!h]
  \centering \scriptsize
  \caption{Nominal vs.\ jitter‐scaled models (Cases A and B)}
  \label{tab:jitter_comparison}
  \begin{tabular}{l|l|l}
    \hline
    Nominal ($\ts$) & Case A: Plant Measurement & Case B: Controller Implementation \\
    \hline
    $A$    & $A = (1 + \varepsilon) \tilde A$    & $\tilde A = A / (1 + \varepsilon)$   \\
    $B$    & $B = (1 + \varepsilon)\,\tilde B$                               & $\tilde B = B / (1 + \varepsilon)$                              \\
    $H(s)$ & $H(s) = \tilde H\!\bigl(\tfrac{s}{1+\varepsilon}\bigr)$ & $\tilde H(s) = H\!\bigl((1+\varepsilon)\,s\bigr)$    \\
    \hline
  \end{tabular}
\end{table}

\section{Examples} \label{sec:examples}
\subsection{Resulting LPV System}

If the sampling‐time error $\varepsilon_k$ can be measured online, the sampled plant admits a Linear Parameter‐Varying (LPV) realization whose scheduling variable is the jitter fraction.  In this formulation, the state‐space matrices depend affinely on $\varepsilon_k$, enabling standard gain‐scheduling or LPV synthesis.

\begin{equation}
  \dot x = A(\varepsilon_k) x + B(\varepsilon_k) u,
\end{equation}
where
\begin{equation}
  A(\varepsilon_k)=\left(1+\varepsilon_k\right)\,A,\quad
  B(\varepsilon_k)=\left(1+\varepsilon_k\right)\,B,
\end{equation}
and the scheduling parameter satisfies $\varepsilon_k\in(-1,1)$, ensuring each realization remains well‐posed. Note that $\varepsilon_k$ is a piecewise constant function, changing its value between samples.

With this LPV description, practitioners can apply established design tools (e.g.\ \cite{Shamma2008LPV}) to synthesize controllers that explicitly account for sampling‐time jitter.  Thus, jitter is handled directly by updating the plant model at each step according to the measured $\varepsilon_k$.

\subsection{Measuring a First-Order System}
Consider a first-order continuous-time system with transfer function
$ G(s) = \frac{1}{s + a}$, 
where \(a > 0\). The perceived transfer function under constant jitter \(\tj = \varepsilon \ts\) is
\[ \hat{G}(s) = G\left(\frac{s}{1 + \varepsilon}\right) = \frac{1 + \varepsilon}{s + a (1 + \varepsilon)}. \]
Thus, the pole of the perceived system is at \(-a (1 + \varepsilon)\), which is more negative if \(\varepsilon > 0\), indicating a faster perceived response. However, the DC gain remains \(\frac{1}{a}\), as
\[ \hat{G}(0) = \frac{1 + \varepsilon}{a (1 + \varepsilon)} = \frac{1}{a} = G(0). \]
This illustrates that while the dynamics are scaled, the steady-state behavior is preserved.

\subsection{PID Controller Under Constant Jitter}

For the controller‐implementation scenario (see Section~\ref{sec:approach}, case B), sampling‐time execution at $\ts(1+\varepsilon)$ rescales the continuous‐time state matrices by $(1+\varepsilon)^{-1}$ and, in the Laplace domain, corresponds to substituting 
\[
s\;\mapsto\;s(1+\varepsilon)
\]
in any designed controller $C(s)$.  Consequently, controller gains, and time constants are altered under jitter.

For the practitioner, it is worthwhile investigating the most prevalent control structure - even for constant jitter - in order to obtain some insights. Consider the realizable PID law $C(s)$ 
\begin{equation}
C(s)=K_{p}+\frac{K_{i}}{s}+K_{d}\frac{s}{\tau_{d}s+1}.
\end{equation}
Under constant jitter $\varepsilon$, the effective controller becomes
\begin{equation}
\tilde C(s)=
K_{p}
+\frac{K_{i}}{(1+\varepsilon)s}
+K_{d}(1+\varepsilon)\frac{s}{\tau_{d}(1+\varepsilon)s+1}.
\end{equation}

Assuming positive jitter $\varepsilon>0$, the controller parameters modify as follows:
\begin{itemize}
\item Proportional gain $K_{p}$ remains unchanged.
\item Integral gain reduces to $K_{i}/(1+\varepsilon)$, slowing disturbance rejection.
\item Derivative gain increases to $K_{d}(1+\varepsilon)$, boosting mid‐band response.
\item Derivative time constant expands to $\tau_{d}(1+\varepsilon)$, shifting the roll‐off toward lower frequencies.
\end{itemize}

For a representative jitter of $\varepsilon=0.1$, the integral and derivative gains change by approximately 10\%, which underpins the rule‐of‐thumb that up to 10\% sampling‐time jitter is acceptable. If $\varepsilon<0$, these effects reverse, accelerating integral action and reducing derivative action.

\section{Conclusion}
We have modeled sampling‐time jitter as a relative perturbation $\varepsilon$ of the nominal sampling period and shown that it induces a simple scaling law in both measurement and implementation scenarios. Every sample’s continuous‐time $A$‐ and $B$‐matrices are multiplied by $(1+\varepsilon)$ when recovering a jitter‐free model, or divided by $(1+\varepsilon)$ when executing a nominal design under jitter. In the Laplace domain, this corresponds to a frequency scaling of the transfer function.
Although elementary, this brief collects these facts into a single, concise reference for continuous‐time interpretation.

Our results use a sample-by-sample deterministic affine scaling of the $A$‐ and $B$‐matrices. In practice, the jitter fraction $\varepsilon_k$ is random. This approach is open for future extensions exploiting the affine relationship (jitter can be regarded as a multiplicative disturbance) with jitter as a random variable with a known worst-case probability-density function.
Adding time-varying delays into the jittered system behavior would lead the results presented here to distributed or networked control, where jitter arises from communication delays and asynchronous execution. 

Based on these results, our future work will focus on stochastic analysis and robust control design approaches.
\appendix

\section{Technical Lemmas}

The following lemma shows that transformation from continuous time to discrete time and back is unique if the sampling time is chosen small enough.

\begin{lem}\label{lem:ts_for_no_aliasing}
    A continuous-time pole $\lambda = \sigma+\im \omega$, mapped to a discrete-time pole $\lambda\sr{d} = e^{\lambda\,\ts}$, can be uniquely recovered via the logarithm
        $\lambda = \frac{1}{\ts}\ln\!\left(\lambda\sr{d}\right)$
    if
    \begin{equation}
        \ts \le \frac{\pi}{|\omega|}.
    \end{equation}
\end{lem}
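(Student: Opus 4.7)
The plan is to recognize this as a standard aliasing / Nyquist-type statement applied to the z-transform mapping $\lambda \mapsto e^{\lambda t_s}$, and to show that the condition on $t_s$ ensures the argument $\lambda t_s$ lies in the principal strip of the complex logarithm. The exponential is $2\pi i$-periodic in its argument, so the map is only locally invertible, and uniqueness of the inverse requires restricting $\mathrm{Im}(\lambda t_s)$ to an interval of length at most $2\pi$.

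First I would state the non-injectivity explicitly: for two candidate poles $\lambda = \sigma + i\omega$ and $\lambda' = \sigma' + i\omega'$ we have $e^{\lambda t_s} = e^{\lambda' t_s}$ if and only if $\sigma = \sigma'$ and $\omega t_s - \omega' t_s = 2\pi m$ for some $m \in \mathbb{Z}$. Hence all pre-images of a given $\lambda_d$ differ by shifts of $\omega$ by integer multiples of $2\pi/t_s$. To single out one pre-image, I would restrict to the principal branch of $\ln$, whose image is the horizontal strip $\{z : \mathrm{Im}(z) \in (-\pi,\pi]\}$, so that $\tfrac{1}{t_s}\ln(\lambda_d)$ picks out the unique $\lambda$ with $\omega t_s \in (-\pi,\pi]$.

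Next I would show that the stated bound $t_s \le \pi/|\omega|$ is exactly what is needed to guarantee that the true pole falls in this principal strip: it is equivalent to $|\omega| t_s \le \pi$, i.e.\ $\omega t_s \in [-\pi,\pi]$, so the principal logarithm returns $\lambda$ itself and no other aliased candidate $\lambda + i\, 2\pi m/t_s$ can simultaneously satisfy the bound (any such alias would have imaginary part of magnitude at least $2\pi/t_s - |\omega| \ge \pi/t_s > |\omega|$, violating the assumption applied to it). Combined with $\sigma$ being uniquely determined by $|\lambda_d| = e^{\sigma t_s}$, this yields unique recovery.

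The only mildly delicate point, and the place I would be most careful, is the boundary case $\omega t_s = \pi$, where $\lambda_d$ is a negative real number and both $+\pi$ and $-\pi$ are valid arguments; the half-open convention $(-\pi,\pi]$ for the principal branch resolves this so the inequality in the lemma may be taken as non-strict. No hard calculation is required; the proof is essentially a bookkeeping of the branch structure of $\ln$.
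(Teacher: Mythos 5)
Your argument is correct and follows essentially the same route as the paper's own proof: write $\lambda\sr{d}$ in polar form, observe that unique inversion requires the argument $\omega\,\ts$ to lie in the principal branch of the logarithm, and note that the stated bound is exactly $|\omega\,\ts|\le\pi$. Your additional bookkeeping (explicit aliased pre-images $\lambda + \im\,2\pi m/\ts$, recovery of $\sigma$ from $|\lambda\sr{d}|$, and the boundary case $\omega\,\ts=\pi$) only elaborates details the paper leaves implicit.
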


\begin{proof}
    Writing the discrete-time pole in polar form,
        $\lambda\sr{d} = e^{\sigma\,\ts}\,e^{\im\omega\,\ts},$
    the argument of that complex number is
        $\arg\!\left(\lambda\sr{d}\right) = \omega\,\ts$.
    To recover the continuous-time pole uniquely, the argument must lie within the principal branch,
        $-\pi < \omega\,\ts \le \pi$.
    This condition implies
        $|\omega\,\ts|\le \pi$ 
   leading to the stated result.
\end{proof}

\begin{note}
    This criterion is related to the Nyquist-Shannon sampling theorem. For a pole oscillating at frequency $|\omega|$, 
    converting the frequency of $\omega$ from rad/s to Hz reveals the known rule $f\sr s \geq 2 \abs{\omega} (\text{Hz}) $.
\end{note}

\bibliographystyle{IEEEtran}
\bibliography{Literature}

\end{document}